\documentclass[pra,aps,twocolumn,superscriptaddress,longbibliography, nofootinbib, showpacs]{revtex4-1}

\usepackage[colorlinks=true, citecolor=red, urlcolor=blue ]{hyperref}
\usepackage{graphicx}
\usepackage{bm}
\usepackage{amsmath,amsfonts}
\usepackage{amsthm}
\usepackage{hyperref}
\usepackage{xcolor}
\colorlet{linkequation}{blue}
\newcommand*{\SavedEqref}{}
\let\SavedEqref\eqref
\renewcommand*{\eqref}[1]{%
  \begingroup
    \hypersetup{
      linkcolor=linkequation,
      linkbordercolor=linkequation,
    }%
    \SavedEqref{#1}%
  \endgroup
}
\hypersetup{
    urlcolor=magenta,
    citecolor=blue,
    linkcolor=blue,
    linkbordercolor=linkequation
           }

\usepackage{braket}
\usepackage{float}
\theoremstyle{plain}
\usepackage{color}
\usepackage{amssymb}
\usepackage{amsthm}
\usepackage{amsfonts}
\usepackage{float}
\usepackage{tabularx}
\usepackage{graphicx}

\usepackage{mathtools}
\usepackage{esvect}
\usepackage{wrapfig}
\usepackage{amsthm}
\usepackage{verbatim}
\usepackage{bbm}
\usepackage[normalem]{ulem}

\usepackage{enumitem}
\usepackage{fmtcount}
\usepackage{booktabs}
\usepackage{csquotes}
\usepackage{epsfig}

\usepackage{tabularx}
\usepackage{graphicx}
\usepackage{color,soul}
\usepackage{amsmath}
\usepackage{braket}
\usepackage{latexsym}
\usepackage{bm}
\usepackage{graphics,epstopdf}
\usepackage{enumitem}
\usepackage{fmtcount}
\usepackage{booktabs}
\usepackage{csquotes}
\usepackage{epsfig}

\theoremstyle{plain}

\def\bea{\begin{eqnarray}}
\def\eea{\end{eqnarray}}
\def\ba{\begin{array}}
\def\ea{\end{array}}

\def\beq{\begin{equation}}
\def\eeq{\end{equation}}

\usepackage[normalem]{ulem}
\usepackage{float}
\usepackage{graphicx} 
\usepackage{lipsum}
\usepackage{dcolumn}          
\usepackage{amssymb}
\usepackage{appendix}
\usepackage{physics}   
\usepackage{mathtools}
\usepackage{esvect}
\usepackage{wrapfig}
\usepackage{amsthm}

\usepackage{verbatim}
\usepackage{bbm}

\usepackage[mathscr]{euscript}
\def\Tr{\operatorname{Tr}}

\def\({\left(}
\def\){\right)}
\def\[{\left[}
\def\]{\right]}



\newtheorem{corollary}{Corollary}

\newtheorem{observation}{Observation}

\begin{document}
\title{Measurement incompatibility at all remote parties do not always permit Bell nonlocality}

\author{Priya Ghosh$^1$, Chirag Srivastava$^2$, Swati Choudhary$^{1,3}$, and Ujjwal Sen}
\affiliation{Harish-Chandra Research Institute,  A CI of Homi Bhabha National Institute, Chhatnag Road, Jhunsi, Prayagraj  211019, India\\
$^2$Laboratoire d'Information Quantique, CP 225, Université libre de Bruxelles (ULB),
Av. F. D. Roosevelt 50, 1050 Bruxelles, Belgium\\
$^3$Center for Quantum Science and Technology (CQST) and  
Center for Computational Natural Sciences and Bioinformatics (CCNSB),
International Institute of Information Technology Hyderabad, Prof. CR Rao Road, Gachibowli, Hyderabad 500 032, Telangana, India}

\begin{abstract}
Two important ingredients necessary for obtaining Bell nonlocal correlations between two spatially separated parties are an entangled state shared between them and an incompatible set of measurements employed by each of them. We focus on the relation of Bell nonlocality with incompatibility of the set of measurements employed by both the parties, in the two-input and two-output scenario. We first observe that Bell nonlocality can always be established when both parties employ any set of incompatible projective measurements.
On the other hand, going beyond projective measurements, we present a class of incompatible positive operator-valued measures, employed by both the observers, which can never activate Bell nonlocality. 
Furthermore, we find a sufficient criterion for achieving Bell nonlocal correlations given a fixed amount of pure two-qubit entanglement and a fixed amount of incompatibility of projective measurements applied by either both parties or a single party.
\end{abstract}

\maketitle

\section{Introduction}
\label{sec1}
Violation of Bell inequality~\cite{bell_aspect_2004, bell-review} is a phenomenon of utmost fundamental and practical importance in quantum information.
The Bell scenario investigates into whether correlations established from local measurements by spatially separated entities, can be accounted for by using local hidden variable (LHV) models. Instances of non-existence of any LHV model for explaining these correlations are termed as Bell nonlocal. 
Quantum theory predicts the existence of such Bell nonlocal correlations. For experimental validation, see Refs~\cite{bell-expt-1,bell-expt-2,bell-expt-3,bell-expt-4,bell-expt-6,bell-expt-7,bell-expt-8,bell-expt-9}.
This intriguing characteristic has practical implications, serving as a valuable resource in various quantum information applications, including device-independent quantum key distribution~\cite{bell-cryptography,scarani-qkd,pironio-qkd-asymmetric,acin-qkd,colbeck-qkd,farkas-qkd} and device-independent random number generation~\cite{Pironio2010,kent-randomness,pan-randomness,colbeck-randomness}.

Two important resources of quantum theory responsible for obtaining non-classical effects are entanglement~\cite{horodecki09,Guhne09,sreetama-titas} and incompatible measurements~\cite{incomp-busch,incomp-lahti,incomp-stano,incomp-pvm,Heinosaari10,incomp-ziman,incompatibility-review}. 
Both of them are also necessary for establishing Bell nonlocal correlations among spatially separated parties. However, none of them separately guarantee Bell nonlocality. Firstly, for some entangled states, there exists a LHV model that can predict the results for any local measurement setting~\cite{bell-ent,Barrett02} (See also~\cite{popecu-tele,popescu-hidden,HHH-PLA,gisin-hidden,peres-collective} in this regard.). On the other hand, for some measurement settings applied by one of the parties that are globally incompatible, there exists a LHV model for all shared bipartite states and all measurement settings at the other party~\cite{imcomp-bell-not-1,incomp-bell-not-2}. 
However, for the Bell scenario of two spatially separated parties where each party employs two binary outcome measurements, 
the incompatibility of measurements at one party can always ensure a Bell inequality violation, i.e., if one party performs incompatible measurements, there will always exist a shared state and a pair of measurements at the another party that result in a violation of a Bell inequality~\cite{incomp-bell}.

In the current work, we study the relationship between Bell nonlocality and the measurement incompatibility associated with all the parties in a Bell scenario of two spatially separated parties. 
Specifically, we aim to determine whether it is always possible to find a shared state between the two parties that exhibits Bell nonlocality for any pair of incompatible measurements at both parties. 
It is straightforward to see from the results of ~\cite{imcomp-bell-not-1,incomp-bell-not-2} that incompatible measurements, performed by each party, do not always allow Bell nonlocality. However, the same conclusion is not at all obvious for the Bell scenario of two parties with each party performing two binary outcome measurements. Since in this scenario, which we also refer as ``two-input and two-output Bell scenario", measurement incompatibility by one of the parties can always ensure Bell nonlocality \cite{incomp-bell}.
Here, we focus on a two-input and two-output Bell scenario and use the Clauser-Horne-Shimony-Holt (CHSH) inequality~\cite{Clauser69} to investigate Bell nonlocality.
We first observe that there will always be a shared state such that any pair of incompatible projective measurements at both parties will violate the CHSH inequality. Interestingly, however, the same is not true when both parties employ incompatible positive operator-valued measurements. I.e., we find a specific set of incompatible positive operator-valued measurements at both parties that will never show a Bell nonlocality in this Bell scenario, whatever state is shared among them.
Thus, the incompatibility of measurements at both parties do not ensure Bell nonlocality in the two-input and two-output Bell scenario as well.

In the latter half of our work, 
we focus on a two-input and two-output Bell scenario 
where the parties can share only pure two-qubit states and they can perform only local projective measurements at their sides. We try to develop a relationship between a fixed amount of shared entanglement and a fixed amount of incompatibility of measurements at party/parties in this Bell scenario for following two cases. First, we try to find a sufficient criterion required to violate the CHSH inequality for a fixed amount of shared entanglement and a fixed amount of degree of incompatibility of the measurements at both the parties. 
It helps us to find that even with an arbitrarily low degree of incompatibility of measurements associated with both parties, Bell nonlocality can still be achieved if the shared entanglement exceeds a certain threshold. 
Second, given a fixed amount of shared entanglement and a fixed amount of degree of incompatibility corresponding to the measurements at one party, we aim to find a sufficient condition to violate Bell inequality for at least one pair of measurements at another party in the same Bell scenario.
In this second case, we find that only maximally entangled states can exhibit Bell nonlocality for all incompatible projective measurements at one party.

The structure of the remaining sections in this paper unfolds as follows. In Section~\ref{sec2}, a succinct overview of various concepts such as joint measurability and Bell inequality is presented. In Section~\ref{sec3}, we show that when the measurements are positive operator-valued measurements, incompatibility of the measurements at both parties do not always guarantee  Bell nonlocality, regardless of the shared quantum state by providing an example. The subsequent section~\ref{sec4} presents a sufficient condition of CHSH inequality violation for a fixed amount of shared entanglement and a fixed amount of measurement incompatibility associated with both parties' measurements. Additionally, this section explores the relationship between a fixed amount of shared entanglement and a fixed amount of measurement incompatibility corresponding to two projective measurements at one party, ensuring CHSH inequality violation with at least one pair of projective measurements at the another party. The final section, Section~\ref{sec5}, is devoted to outlining our concluding remarks.

\section{Preliminaries}
\label{sec2}
\textit{Measurement Incompatibility:} Let us initiate our discussion by delving into the realm of incompatible measurements, as expounded in Ref.~\cite{incomp-pvm,Heinosaari10,incompatibility-review}. 
A collection of Positive Operator-Valued Measures (POVMs) denoted as $\lbrace M_{a|x}: M_{a|x}\geq0, \sum_a M_{a|x}=\mathbb{I}\rbrace$, where $\mathbb{I}$ is the identity, is deemed jointly measurable or compatible if the existence of a parent POVM $\lbrace G_\lambda \rbrace$ and a set of conditional probabilities $\lbrace p(a|x,\lambda) \rbrace$ can be established, satisfying the relation
\begin{align}
M_{a|x} = \sum_{\lambda} p(a|x,\lambda) G_\lambda,
\end{align}
for all possible $a$ and $x$. In the absence of such a relationship, the assemblage of POVMs is classified as incompatible. Projective measurements are a special case of POVMs with an additional property that $M_{a|x}M_{a'|x}=M_{a|x}\delta_{aa'}$ and the effects, $M_{a|x}$, of the measurement are called as it's projectors.
The findings presented in~\cite{Heinosaari10} established a significant connection between non-commutativity of the projectors of the projective measurements and incompatibility of corresponding measurements. Specifically, the research demonstrates that in the case where only  projective measurements are performed, compatibility of these measurements equates to pairwise commutativity of all the projectors of these projective measurements. Thus, incompatible measurements, in the context of only projective measurements, imply that at least one pair of projectors of these measurement do not commute.

\textit{Bell nonlocality}: Consider an experiment of two spatially separated parties, Alice and Bob, share a state among them and perform measurements at their local subsystems to obtain joint statistics of their outcomes. Let the measurement settings employed by Alice (Bob) is denoted by $x(y)$ and the outcomes obtained are denoted by $a(b)$, then the joint probability distribution of obtaining $a$ and $b$, given the settings employed are $x$ and $y$ by Alice and Bob, respectively, is denoted by $p(a,b|x,y)$. Now, any local hidden variable (LHV) theory says that any joint probability distribution for such an experiment must admit
\begin{align}
p(a, b | x, y) = \int_\lambda d\lambda p(\lambda) p(a | x, \lambda) p(b | y, \lambda),
\end{align}
where $\lambda$ is a hidden variable which is shared between Alice and Bob before they start the experiment and $p(\lambda)$ is it's probability distribution. Now within quantum theory, the measurement by Alice (Bob), yielding outcome $a(b)$, given the setting $x(y)$, can be denoted by positive operators, $\lbrace M_{a|x} \rbrace (\lbrace N_{b|y} \rbrace)$, constituting POVMs. Also, these measurements are performed locally on a joint system which can be described by some shared quantum state, denoted by $\rho$. Then the quantum theory tells that the joint probability distributions are given by
\begin{equation}
    p(a, b | x, y)=\langle  M_{a|x} \otimes N_{b|y} \rangle_\rho,
\end{equation}
where $\langle O \rangle_\rho$ denotes the expectation value of operator $O$ with respect to $\rho$.
J. S. Bell showed that there exist correlations in the quantum theory which do not admit any LHV model~\cite{bell_aspect_2004} and such correlations are thus termed as Bell nonlocal. 
This and many following works have, thereafter, presented many inequalities which correlations admitting LHV model must satisfy and thus any violation of such inequalities can detect Bell nonlocal correlations~\cite{Clauser69,CGLMP,oh-cv,cavalcanti-cv,wolf-cv,bell-review}.
The simplest Bell scenario that exhibits Bell nonlocal correlations is one where each observer has two measurement settings, with two possible outcomes for each setting.
In such a setting, the possible measurement inputs and outcomes can be denoted as $x,y=0,1$ and $a,b=\pm1$ and we will call such a setting as two-input and two-output Bell scenario. The Bell inequality in the simplest setting by Clauser-Horne-Shimony-Holt (CHSH)~\cite{Clauser69} is given as
\begin{equation} \label{sipahi}
 \langle A_0 \otimes B_0 +  A_0\otimes B_1  + A_1 \otimes  B_0  -  A_1 \otimes B_1 \rangle \leq 2,
\end{equation}
where $A_x=M_{+1|x}-M_{-1|x}$ and $B_y=N_{+1|y}-N_{-1|y}$ are observables corresponding to the measurement settings $x$ and $y$ by Alice and Bob, respectively and the LHS is the expectation value of the Bell operator. 
This implies that no LHV theory can have a CHSH value (defined as the expectation value of the Bell operator in the CHSH inequality) 
greater than 2. Any violation of this inequality tells that the correlation present between the observers is Bell nonlocal.
The maximum CHSH value obtained in quantum mechanics is $2\sqrt{2}$ and thus there exist correlations in quantum mechanics which are Bell nonlocal and do not admit any LHV model. This maximal value is obtained when Alice and Bob share $|\phi^+\rangle (\coloneqq \frac{|00\rangle+|11\rangle}{\sqrt{2}})$, and Alice measures in the basis of observables $\sigma_z, ~\sigma_x$, and Bob opting for the measurement in the basis of $\frac{\sigma_z + \sigma_x}{\sqrt{2}},~ \frac{\sigma_z - \sigma_x}{\sqrt{2}}$. Here, $\sigma_x$,$\sigma_y$, and $\sigma_z$ are standard Pauli operators with $|0\rangle$ and $|1\rangle$ being the eigenstates of $\sigma_z$ corresponding to the eigenvalues $1$ and $-1$, respectively.

On the other hand, an LHV model can be constructed in order to confirm that the given measurement settings will never allow a violation of any Bell inequality. However, in the two input and two output Bell scenario, there exists an important result establishing an equivalence between the existence of LHV model with the condition that a set of CHSH inequalities always hold~\cite{Fine82,bell-review}. These set of CHSH inequalities contain the inequality \eqref{sipahi} and three more inequalities formed by interchanging, (a) $A_0$ with $A_1$, (b) $B_0$ with $B_1$, and (c) $A_0$ with $A_1$ and $B_0$ with $B_1$ together, in the inequality \eqref{sipahi}. That is, if all these CHSH inequalities always hold, then there must exist an LHV model \cite{Fine82} for the measurement settings corresponding to observable $A_0$, $A_1$, $B_0$, and $B_1$.

\section{Relation between incompatibility and Bell nonlocality}
\label{sec3}
In this section, we explore the connection between incompatible measurements at both parties and the emergence of Bell nonlocality in a simplest Bell scenario. 
The scenario involves two spatially separated parties sharing a quantum state, where each party has two measurement settings, each with two possible dichotomic outcomes.
We begin by considering the case where both parties perform only projective measurements at their local systems.
Any incompatibility of projective measurements, on the sides of both the parties, is sufficient to show Bell nonlocality. That is, when any incompatible projective measurements at both parties are performed, there always exist a quantum shared state for which the parties can share to have Bell nonlocal correlations. However, when the parties consider the local measurements beyond the projective measurements, i.e., POVMs, we find a set of incompatible measurements at both the parties that fail to exhibit Bell nonlocal correlations for any shared state. Thus, incompatibility of POVMs at both parties is not always sufficient for the activation of Bell nonlocality. The Clauser-Horne-Shimony-Holt (CHSH) inequality~\cite{Clauser69} is used to detect Bell nonlocality throughout our work.

\subsection{Projective measurements}

\begin{observation}
\label{theo-1}
In the two-input and two-output Bell scenario with measurements restricted to local projective measurements, any pair of incompatible measurements at both parties is sufficient to demonstrate Bell nonlocality.
\end{observation}

\begin{proof}
Let's consider the CHSH Bell operator, $S \coloneqq A_0\otimes (B_0 + B_1) + A_1 \otimes (B_0 - B_1)$, where $A_{0(1)}$ and $B_{0(1)}$ are observables corresponding to measurements associated with Alice and Bob respectively each having eigenvalues $\pm 1$. 
The squared  CHSH operator can be expressed in the following form \cite{Landau87}:
\begin{align*}
S^2 &= 4I + [A_0, A_1] \otimes [B_0, B_1],\\
S^2 &= 4 (I + J).
\end{align*}
where $I$ is the identity operator and $4J \coloneqq [A_0, A_1] \otimes [B_0, B_1]$. Let the highest eigenvalue of $J$ be denoted as $\mu$. Thus the operator $S^2$ has $4(1+\mu)$ as the highest eigenvalue. Also, since S is Hermitian, $S^2$ has to be positive semidefinite, i.e. all it's eigenvalues are non-negative. These facts imply that
\begin{align*}
   ||S^2|| &= 4(1 + \mu),
\end{align*}
where $||O||:=\sup_{|\psi\rangle }\sqrt{\langle\psi|O^{\dagger}O|\psi}\rangle$, with $|\psi\rangle$ being any pure state. Since $4(1+\mu)$ is the highest eigenvalue of $S^2$, this implies that the maximum absolute value of the eigenvalues of $S$ has to be $2\sqrt{1 + \mu}$. Hence, the maximum CHSH value, when $\mu$ is the highest eigenvalue of operator $J$, is given by~\cite{Landau87}
\begin{equation}
    ||S|| = 2\sqrt{1 + \mu}. \nonumber
\end{equation}

Now, let's assume that the measurement settings by both the parties are non-commutative, meaning that $[A_0, A_1] \neq 0$ and $[B_0, B_1] \neq 0$. This implies that $J$ has at least one non-zero eigenvalue. Also notice that the trace of $J$ is zero, hence the highest eigenvalue of $J$ has to be positive, i.e.,  $\mu>0$.
This implies
\begin{equation}
    ||S||>2,
\end{equation}
when measurement settings by both the parties are non-commutative. Since, non-commutativity of measurements is equivalent to incompatibility of measurements when the measurements are projective ones. Thus, we always get Bell inequality violation for any incompatible projective measurement settings by both the parties. This concludes the proof.

\end{proof}
Bell inequality violation always implies that the local measurements performed by Alice and Bob have to be incompatible because whenever any of the parties perform a jointly measurable sets of measurements, then there always exists an LHV model. With this and observation \ref{theo-1}, it is clear that when both parties perform projective measurements, the incompatibility of the measurements applied by both parties is necessary and can always yield Bell-nonlocal correlations in the two-input and two-output Bell scenario.

\subsection{Non-projective POVM measurements}

\begin{observation}
    In the two-input and two-output Bell scenario, incompatibility of POVMs applied by both parties is not sufficient to exhibit Bell nonlocality.
\end{observation}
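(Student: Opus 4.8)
The idea is to exhibit an explicit parametrised family of incompatible POVMs, used by both parties, whose correlations are local for every shared state. The key is that an unsharp qubit observable satisfies $A_x^{2}=\eta^{2}I<I$, so the mechanism of Observation~\ref{theo-1} — a nonzero commutator forcing $\|S\|>2$ through the Landau identity — breaks down. Concretely, on Alice's side I would take the two POVMs $M_{\pm1|x}=\tfrac12\bigl(I\pm\eta\,\vec n_x\cdot\vec\sigma\bigr)$ with orthogonal directions $\vec n_0=\hat z$, $\vec n_1=\hat x$, so that $A_x=M_{+1|x}-M_{-1|x}=\eta\,\vec n_x\cdot\vec\sigma$; Bob uses an identical pair (its orientation is immaterial), with the same sharpness $\eta\in(0,1)$, which makes both pairs genuinely non-projective. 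Since the effects act on a qubit, the shared state $\rho$ may be taken two-qubit without loss of generality. The proof then reduces to two independent estimates — incompatibility of each pair, and the absence of any CHSH violation for all $\rho$ — and to checking that both hold for a common range of $\eta$.

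For incompatibility I would invoke the standard joint-measurability criterion for a pair of unbiased qubit effects: $\{\tfrac12(I\pm\vec a\cdot\vec\sigma)\}$ and $\{\tfrac12(I\pm\vec b\cdot\vec\sigma)\}$ are compatible if and only if $|\vec a+\vec b|+|\vec a-\vec b|\le 2$~\cite{incomp-busch,incompatibility-review}. For orthogonal axes and common sharpness $\eta$ the left-hand side equals $2\sqrt2\,\eta$, so each pair is incompatible precisely when $\eta>1/\sqrt2$. Orthogonality is the optimal choice here: at relative angle $\theta$ the threshold becomes $1/(\cos\tfrac\theta2+\sin\tfrac\theta2)\ge 1/\sqrt2$, so any other angle only shrinks the admissible window.

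For the absence of nonlocality, write $A_x=\eta\,\widetilde A_x$ with $\widetilde A_x:=\vec n_x\cdot\vec\sigma$ a genuine $\pm1$-valued projective observable, and likewise $B_y=\eta\,\widetilde B_y$. The identity from the proof of Observation~\ref{theo-1} applied to $\widetilde A_x,\widetilde B_y$ gives $\|\widetilde S\|\le 2\sqrt2$ for the associated CHSH operator $\widetilde S$; the crucial point is that the unsharp CHSH operator factorises exactly as $S=\eta^{2}\,\widetilde S$, so $\|S\|\le 2\sqrt2\,\eta^{2}$ uniformly in $\rho$, whence $|\langle S\rangle_\rho|\le 2$ for every $\rho$ as soon as $\eta\le 2^{-1/4}$. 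The same bound controls each of the eight CHSH expressions obtained by relabelling inputs and outputs (relabelling merely flips signs of the $\widetilde A_x,\widetilde B_y$), and since in the two-input two-output scenario these inequalities exhaust the nontrivial facets of the local polytope~\cite{bell-review}, the behaviour $p(a,b|x,y)$ admits an LHV model for every $\rho$.

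Combining the two estimates, any $\eta$ in the non-empty interval $(1/\sqrt2,\,2^{-1/4}]$ yields a pair of incompatible POVMs on each side that can nonetheless never produce Bell-nonlocal correlations, proving the statement. I expect the genuine obstacle to be conceptual rather than computational: one must be careful that ruling out a CHSH violation really rules out all nonlocality in this scenario (Fine's characterisation of the 2-2-2 local polytope) and that an arbitrary shared state may be reduced to a two-qubit one; once these are granted, the incompatibility criterion and the Tsirelson-type norm bound each take a single line.
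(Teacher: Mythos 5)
Your proposal is correct and follows essentially the same route as the paper: the same family of unsharp Pauli POVMs with sharpness parameter $\eta$ (the paper's $\lambda$), the same incompatibility threshold $\eta>1/\sqrt{2}$, and the same bound $\|S\|=2\sqrt{2}\,\eta^{2}$ yielding the window $\eta\in(1/\sqrt{2},2^{-1/4}]$. If anything, you are slightly more complete than the paper, since you explicitly invoke the Busch joint-measurability criterion and Fine's characterisation of the two-input two-output local polytope to rule out \emph{all} nonlocality rather than only CHSH violations.
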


\begin{proof}
Let Alice's and Bob's measurements are denoted by  $\lbrace M_{a|x} \rbrace$ and $\lbrace N_{b|y}\rbrace$, respectively, where $a,b=\pm1$ denotes the two outcome and $x,y=0,1$ denotes the two measurement settings. Let us consider Alice and Bob to adopt the following measurement strategies: 
\begin{align*}
M_{\pm 1|0} &=  \frac{\mathbbm{1}\pm\lambda \sigma_z}{2},\\
M_{\pm 1|1} &= \frac{\mathbbm{1}\pm\lambda \sigma_x}{2},\\
N_{\pm 1|0} &=  \frac{\mathbbm{1}\pm\lambda \frac{\sigma_z+\sigma_x}{\sqrt{2}}}{2}, \\
N_{\pm 1|1} &=  \frac{\mathbbm{1}\pm\lambda \frac{\sigma_z-\sigma_x}{\sqrt{2}}}{2},
\end{align*}
where $\lambda \in [0,1]$.
Now consider $A_x \coloneqq M_{+1|x}-M_{-1|x}$ and $B_y \coloneqq N_{+1|y}-N_{-1|y}$.
The CHSH value for given measurement setting for an arbitrary state $\rho$ is then given by
\begin{equation}
    \langle S \rangle_{\rho}=\sqrt{2}\lambda^2\langle\sigma_z\otimes \sigma_z+\sigma_x\otimes \sigma_x\rangle_\rho.
\end{equation}
where $S$ denotes the Bell operator corresponding to the CHSH inequality.
The maximum CHSH value is then given for $\rho=|\phi^+\rangle\langle \phi^+|$, which is the eigenvector of the operator $\sigma_z\otimes \sigma_z+\sigma_x\otimes \sigma_x$ with the highest eigenvalue 2, and thus,
\begin{equation}
\max_{\rho} \langle S \rangle_{\rho}= 2\sqrt{2}\lambda^2.
\end{equation}
In the same way, all other three Bell operators corresponding to the CHSH inequalities (obtained by swapping (a) $A_0$ with $A_1$, (b) $B_0$ with $B_1$, and (c) $A_0$ with $A_1$ and $B_0$ with $B_1$ together in the inequality \eqref{sipahi}) will yield the highest eigenvalue as $2\sqrt{2}\lambda^2$.
This implies that if $\lambda \leq \frac{1}{2^{1/4}}$, then none of these CHSH inequalities will get violated, 
and thus for the given measurements with $\lambda \leq \frac{1}{2^{1/4}}$, there must exist a LHV model for any shared state \cite{Fine82}. Moreover, it is known that these measurement settings are incompatible when $\lambda > \frac{1}{\sqrt{2}}$~\cite{incompatibility-review}. This implies that for $\frac{1}{\sqrt{2}}<\lambda\leq\frac{1}{2^{1/4}}$, even though the POVMs on both parties are incompatible, they can never activate Bell nonlocality in the two-input and two-output Bell scenario. This concludes the proof. 
\end{proof}

\section{Sufficient condition for Bell nonlocality for fixed amount of measurement incompatibility and entanglement of pure two qubits}
\label{sec4}
In this section, our aim is to find a sufficient condition for Bell nonlocality in the two input and two output Bell scenario for a fixed value of entanglement and given the measurements have a fixed degree of incompatibility. For simplicity, we restrict the study to the case where the shared state is pure and acts on the Hilbert space of dimension $2 \otimes 2$. We also assume that the measurements performed are projective. We find a sufficient condition by obtaining the maximal CHSH value given the fixed amount of entanglement and measurement incompatibility.

Any two-qubit pure state, $|\psi\rangle$, up to a local unitary can be expressed as follows:
\begin{align}\label{saral}
|\psi\rangle = \sqrt{E} \ket{00} + \sqrt{1-E} \ket{11}, 
\end{align}
where $E$ belongs to the interval $[0, \frac{1}{2}]$, and $|0\rangle$ and $|1\rangle$ are orthogonal single-qubit pure states.  The state, $|\psi\rangle$ is product for $E=0$ and maximally entangled when it is equal to $\frac{1}{2}$. Also, the parameter $E$ is an entanglement monotone~\cite{nielsen-prl,vidal-prl,jonathan-plenio-first,hardy-area,vidal-jmo,lo-popescu}, and thus is taken as a measure of entanglement of pure two-qubit states. Now the observables representing the projective measurement settings, acting on a Hilbert space of dimension 2 and with  the outcomes $\pm 1$, applied by Alice and Bob can be expressed as
\begin{align}
    A_r&=\hat{a}_r.\vec{\sigma}; ~r=0,1,\nonumber\\
    B_s&=\hat{b}_s.\vec{\sigma}; ~s=0,1,
\end{align}
respectively where $\vec{\sigma}=(\sigma_x,~\sigma_y,~\sigma_z)$ and $\hat{a}_r(\hat{b}_s)$ represent unit vectors in the three-dimensional Euclidean space $\mathbbm{R}^3$. The CHSH operator in terms of these observables is given by,
\begin{align}
    S=(\hat{a}_0 \cdot \vec{\sigma} \otimes (\hat{b}_0 +\hat{b}_1) \cdot \vec{\sigma} 
+ \hat{a}_1 \cdot \vec{\sigma} \otimes (\hat{b}_0-\hat{b}_1) \cdot \vec{\sigma}). \nonumber
\end{align}
In the following subsections, we consider two cases. In the first case, we consider a fixed amount of incompatibility of the measurements applied by both parties. For the second case, we fix the amount of incompatibility of the measurements applied by a single party only. We refer to these two cases as ``both sided two-input and two-output Bell scenario" and ``one sided two-input and two-output Bell scenario" respectively.

\subsection{Both sided two-input and two-output Bell scenario}
The degree of non-commutativity of the measurement settings applied by the parties in the both sided two-input and two-output Bell scenario can be defined as
\begin{align}
    \Delta& \coloneqq \frac{1}{4}||[A_0,A_1][B_0,B_1]||.
\end{align}
Now we are set to state the following:
\begin{observation}
\label{kaun-hu-main}
In a both sided two-input and two-output Bell scenario, assuming the parties share pure two-qubit states with fixed entanglement $E$ and perform local incompatible projective measurements,
the criterion for establishing Bell nonlocal correlations 
is given by $(2-X) \sqrt{1+ \Delta} + X \sqrt{1-\Delta} > 2$ where $X \coloneqq 1 - 2\sqrt{E(1-E)}$, 
and $\Delta$ quantifies the degree of incompatibility of measurements performed by both parties.
\end{observation}

\begin{proof}
The objective is to compute the maximum CHSH value over the pure two-qubit states with a fixed amount of entanglement, $E$, given a fixed amount of incompatibility of measurements performed by both parties. Any pure two-qubit state with $E$ amount of entanglement can be written as $U_1 \otimes U_2 |\psi\rangle$, where $|\psi\rangle$ is given in Eq. \eqref{saral}, has also $E$ amount of entanglement.  
$U_1$ and $U_2$ represent unitaries acting on two parts of $|\psi\rangle$ locally.
Hence, we look for the CHSH value, maximized over local unitaries on $|\psi\rangle$, that is,
\begin{align}
\label{eqn-partial-2}
F_1 &\coloneqq \max_{U_1,U_2} \Tr{[U_1 \otimes U_2 |\psi\rangle \langle \psi| U_1^\dagger \otimes U_2^\dagger S ]} \nonumber \\
&= \max_{U_1,U_2} \Tr[ U_1 \otimes U_2 |\psi\rangle \langle \psi| U_1^\dagger \otimes U_2^\dagger (\hat{a}_0 \cdot \vec{\sigma} \otimes (\hat{b}_0 +\hat{b}_1) \cdot \vec{\sigma} \nonumber \\
&+ \hat{a}_1 \cdot \vec{\sigma} \otimes (\hat{b}_0-\hat{b}_1) \cdot \vec{\sigma})].
\end{align}
Since Alice and Bob have two measurement settings (projective) each, without loss of any generality, the unit vectors of the observables corresponding to these measurements can be taken as,
$\hat{a}_0 = \hat{z}$, $\hat{a}_1 = \hat{z} \cos \phi + \hat{x} \sin \phi$, $\hat{b}_0 = \hat{z}\cos \frac{\theta}{2}+\hat{x}\sin\frac{\theta}{2} $, and $\hat{b}_1 =  \hat{z}\cos \frac{\theta}{2}-\hat{x}\sin\frac{\theta}{2}$, where $\theta,\phi \in [0,\frac{\pi}{2}]$. 
$\hat{z}, \hat{x}$ denote two unit vectors $(0,0,1)$ and $(1,0,0)$ in the three dimensional Cartesian coordinates respectively.
Then, the degree of incompatibility of measurements applied by both parties is given by, $\Delta=\sin \theta \sin \phi$.
We take the parametric form of unitaries, namely $U_1$ and $U_2$, as detailed in Appendix~\ref{appendix}. After substituting the expressions for $U_1$ and $U_2$ into Eq.~\eqref{eqn-partial-2}, the function $F \coloneqq \Tr{[U_1 \otimes U_2 |\psi\rangle \langle \psi| U_1^\dagger \otimes U_2^\dagger S ]}$ becomes parameterized by the variables: $\theta, \phi, \psi_1, \phi_1, \theta_1, \psi_2, \theta_2,$ and $\phi_2$ where $\psi_1, \phi_1, \theta_1, \psi_2, \theta_2,$ and $\phi_2$ are the parameters of local unitaries, $U_1$ and $U_2$.

In order to find the global maximum of $F$ with respect to the parameters of local unitaries, we look for the solutions of the equations where partial derivatives of $F$ is zero with respect to the parameters of unitaries, and then put back the one of these solutions in the expression of $F$ which gives it's maximal value. Firstly, we set $\frac{\partial F}{\partial \phi_1 } = 0$ and plug equation obtained from it into the expression for $F_1$, we ascertain that the maximum of $F$ over $\psi_1$ and $\psi_2$ is achieved when $\psi_1 + \psi_2$ is either $0$ or $\pi$. Opting for the case $\psi_1 + \psi_2 = 0$, we introduce a modified function $F_2 \coloneqq \max_{\psi_1,\psi_2} F$. The next step involves maximizing $F_2$ over $\phi_1,\theta_1,\phi_2,$ and $\theta_2$ to yield $F_1$. Continuing the analysis, upon further substitution of obtained equations from $\frac{\partial F_2}{\partial \phi_1 } = 0$ into $\frac{\partial F_2}{\partial \phi_2} = 0$, we conclude that $\phi_1 = 0$. It is noteworthy that while alternative solutions might exist, they will not yield the global maximum of $F_2$. Consequently, enforcing $\phi_1 = 0$ implies $\phi_2 = 0$ via the deduced $\frac{\partial F_2}{\partial \phi_1 } = 0$.
Continuing onward, substituting $\phi_1 = \phi_2 = 0$ into the equations $\frac{\partial F_2}{\partial \theta_1 } + \frac{\partial F_2}{\partial \theta_2} = 0$ and $\frac{\partial F_2}{\partial \theta_1 } - \frac{\partial F_2}{\partial \theta_2} = 0$, leads to the following relationships:
\begin{align}
\sin(\theta_1 + \theta_2) &= \frac{-\sin \frac{\theta}{2}\cos \phi}{\sqrt{1-\sin \theta \sin \phi}},\\
\cos (\theta_1 + \theta_2) &= \frac{\cos \frac{\theta}{2}- \sin \frac{\theta}{2} \sin \phi}{\sqrt{1-\sin \theta \sin \phi}},\\
\sin (\theta_1 - \theta_2) &= \frac{\sin \frac{\theta}{2}\cos \phi}{\sqrt{1+\sin \theta \sin \phi}},\\
\cos (\theta_1 - \theta_2) &= \frac{\cos \frac{\theta}{2}+ \sin \frac{\theta}{2} \sin \phi}{\sqrt{1+\sin \theta \sin \phi}}.
\end{align}

Upon substituting these equations, along with $\phi_1 = \phi_2 = 0$, into $F_2$, we obtain

\begin{align}
F_1 &= (2-X) \sqrt{1+\sin \theta \sin \phi} + X \sqrt{1-\sin \theta \sin \phi},\nonumber \label{final-eqn}\\
&= (2-X) \sqrt{1+\Delta} + X\sqrt{1-\Delta}.
\end{align}
Here, $X = 1 - 2\sqrt{E(1-E)}$. This formula yields the maximum values of $F$ by considering all feasible unitaries $U_1$ and $U_2$. Furthermore, it can be observed that $\psi_1 + \psi_2 = \pi$ does not result in the global maximum of $F$.
It concludes the proof.
\end{proof}

\begin{corollary}
In a both sided two-input and two-output Bell scenario, any degree of incompatibility in the projective measurements performed by both parties is sufficient to violate Bell nonlocality, provided the entanglement of the shared state satisfies $E > \frac{1}{2}(1-\sqrt{2\sqrt{2}-2})$. 
\end{corollary}

\begin{corollary}
The maximum CHSH value exhibits a non-monotonic relationship with the degree of incompatibility of the projective measurements performed by both parties in a both sided two-input and two-output Bell scenario, when they share a partially entangled pure state with entanglement $0< E < \frac{1}{2}$. However, when $E=\frac{1}{2}$, the maximum CHSH value varies monotonically with the degree of incompatibility of measurements applied by both parties.
\end{corollary}

\subsection{One-sided two-input and two-output Bell scenario}
Here, we focus on a two-input and two-output Bell scenario involving projective measurements at both sides, each acting on a two-dimensional Hilbert space, and the state shared among the parties is a pure two-qubit state. 
Our goal is to establish a sufficient condition for activating Bell nonlocality, given a fixed amount of shared entanglement and a fixed amount of measurement incompatibility on one side, by optimizing over all possible measurement settings on the other side.

\begin{observation}
In the two-input and two-output Bell scenario, where the parties can perform only projective measurements and the allowed shared state is a pure two-qubit state, given a fixed amount of shared entanglement $E$ and a fixed amount of incompatibility in the measurement settings of one party $\lambda$, the criterion for violating the Bell inequality with at least one pair of measurement settings on the other party is as follows:
For $X \leq X_c \coloneqq 2 \frac{\sqrt{1 - \lambda}}{\sqrt{1 + \lambda} + \sqrt{1 - \lambda}}$, the criterion is:
\begin{align}
    (2 - X) \sqrt{1 + \lambda} + X \sqrt{1 - \lambda} > 2
\end{align}
Otherwise, the condition is:
\begin{align}
    \frac{2X}{\sqrt{1 - \lambda \sin{\phi^\ast}}} > 2
\end{align}
where
$X \coloneqq 1 - 2\sqrt{E(1 - E)}$ and $\phi^\ast$ is the solution to the equation, $\frac{X}{\sqrt{1 - \lambda \sin{\phi^\ast}}} = \frac{2 - X}{\sqrt{1 + \lambda \sin{\phi^\ast}}}$.
\end{observation}

\begin{proof}
Here, we address the same concept as in Observation~\ref{kaun-hu-main}, with the key difference being that the fixed amount of measurement incompatibility is considered only for one party (Alice). Consequently, we maximize the CHSH value over all local unitaries acting on fixed pure two-qubit entanglement and all incompatible projective measurements of the other party (Bob). The degree of incompatibility of measurements applied by Alice is defined as $\lambda \coloneqq \frac{1}{2}||[A_0,A_1]||$.
Similar to Observation~\ref{kaun-hu-main}, without any loss of generality, the unit vectors corresponding to the observables of the measurements can be chosen as:  
$
\hat{a}_0 = \hat{z}, \quad \hat{a}_1 = \hat{z} \cos \phi + \hat{x} \sin \phi, \quad \hat{b}_0 = \hat{z} \cos \frac{\theta}{2} + \hat{x} \sin \frac{\theta}{2}, \quad \text{and} \quad \hat{b}_1 = \hat{z} \cos \frac{\theta}{2} - \hat{x} \sin \frac{\theta}{2},
$ 
where $\theta, \phi \in [0, \frac{\pi}{2}]$. From these definitions, the degree of incompatibility of measurement settings employed by Alice is given by $\lambda = \sin \theta$, and  $\Delta$ defined in Observation~\ref{kaun-hu-main} will be
$\Delta = \lambda \sin \phi$.
To derive a sufficient criterion for achieving Bell nonlocal correlations given a fixed amount of pure two-qubit entanglement, $E$, and a fixed incompatibility of measurements employed by Alice, $\lambda$, we just need to optimize $F_1$ with respect to $\phi$ since optimization of CHSH value over the parameters of local unitaries acting on the shared pure two-qubit state with entanglement $E$ has already been addressed in Observation~\ref{kaun-hu-main}.
To achieve this, we first find the conditions for $\frac{\partial F_1}{\partial \phi} = 0$. We have obtained two such conditions, and substituting them into $F_1$ gives the expressions for $T_1$ and $T_2$ as follows:
$T_1 \coloneqq (2 - X) \sqrt{1 + \lambda} + X \sqrt{1 - \lambda}$
and $T_2 \coloneqq \frac{2X}{\sqrt{1 - \lambda \sin{\phi^\ast}}}$,
where $X \coloneqq 1 - 2\sqrt{E(1 - E)}$ and $\phi^\ast$ is the solution to the equation
$
\frac{X}{\sqrt{1 - \lambda \sin{\phi^\ast}}} = \frac{2 - X}{\sqrt{1 + \lambda \sin{\phi^\ast}}}$.

Next, we observe the following properties:
\begin{itemize}
    \item For all $\lambda \neq 0$, $T_1$ is a monotonically decreasing function of $X$, while $T_2$ increases monotonically with $X$.
    \item At $X = 0$, $T_1 > T_2$, and at $X = 1$, $T_2 > T_1$ for all $\lambda \neq 0$.
    \item $T_1$ and $T_2$ intersect at
    $X = 2 \frac{\sqrt{1 - \lambda}}{\sqrt{1 + \lambda} + \sqrt{1 - \lambda}}$.
\end{itemize}

This completes the proof.
\end{proof}

\begin{corollary}
For each set of incompatible projective measurements performed by one party, there exists at least one pair of measurement settings on the other party that leads to a Bell inequality violation if and only if the shared state is maximally entangled, i.e., $E= \frac{1}{2}$.
\end{corollary}

\section{Conclusion}
\label{sec5}
Bell nonlocality is a non-classical phenomenon that cannot be explained by any local hidden variable theory.
Incompatibility of measurements at both parties are necessary for the activation of Bell nonlocality in arbitrary Bell scenarios, regardless of the shared state. Moreover, in the two-input and two-output Bell scenario, it was known that any pair of incompatible measurements at one party ensures the existence of at least one shared state and one pair of measurements at the other party, for which violation of a Bell inequality can be attained~\cite{incomp-bell}. However, 
in Bell scenarios involving more than two measurement settings per party, there exist works exhibiting that measurement incompatibility, employed by a single party, does not always guarantee Bell nonlocality~\cite{imcomp-bell-not-1,incomp-bell-not-2}.

In this paper, we ask a more specific question in the two-input and two-output Bell scenario: Can any pair of measurement incompatibility employed by each party ensure Bell nonlocal correlations for at least one shared state?  
We find that there exists a set of incompatible measurements for each party such that, if both parties apply these measurements on any shared state, Bell nonlocality cannot be activated in this setup. Obtaining this negation required the use of positive operator-valued measurements at both parties. 
However, when restricted to projective measurements, the question has an affirmative answer.

Next, we investigate the interplay between a fixed amount of shared entanglement and a fixed amount of measurement incompatibility required to activate Bell nonlocality in the two-input and two-output Bell scenario.
Our study includes determining a sufficient criterion for Bell inequality violation under the assumption that both the shared entanglement and the measurement incompatibility at both parties are fixed. Additionally, given a fixed amount of shared entanglement and a fixed amount of incompatibility in the measurements at one party, we explore a condition under which the Bell inequality violation will occur with at least one pair of measurements at the other party. These analyses allow us to determine the minimum shared entanglement required to activate Bell nonlocality for any incompatible measurements applied by either single party or both parties in this Bell scenario.

\section*{Acknowledgment}
We thank Edwin Peter Lobo for useful discussions.
CS acknowledges funding from the QuantERA II Programme that has received funding from the European Union’s Horizon 2020 research and innovation programme under Grant Agreement No. 101017733 and the F.R.S.-FNRS Pint-Multi programme under Grant Agreement R.8014.21.
from the European Union’s Horizon Europe research and innovation programme under the project "Quantum Security Networks Partnership" (QSNP, grant agreement No 101114043), from the F.R.S-FNRS through the PDR T.0171.22, from the FWO and F.R.S.-FNRS under the Excellence of Science (EOS) programme project 40007526, from the FWO through the BeQuNet SBO project S008323N. Funded by the European Union. Views and opinions expressed are, however, those of the authors only and do not necessarily reflect those of the European Union, who cannot be held responsible for them.
We acknowledge partial support from the Department of Science and Technology, Government of India through the QuEST grant (grant number DST/ICPS/QUST/Theme3/2019/120).

\section{appendix}
\label{appendix}
The form of general unitaries $U_1$ and $U_2$ in $2\otimes 2$ system are given as follows:
\begin{widetext}

\begin{equation*}
U_1 =  \left( \begin{array}{cc}
\cos \frac{\theta_1}{2}\exp{
\frac{i}{2}(\psi_1+\phi_1)} & \sin \frac{\theta_1}{2} \exp{-\frac{i}{2}(\psi_1-\phi_1)} \\
-\sin \frac{\theta_1}{2} \exp{\frac{i}{2}(\psi_1-\phi_1)} & \cos \frac{\theta_1}{2}\exp{-
\frac{i}{2}(\psi_1+\phi_1)} \end{array} \right),    
\end{equation*}

 \begin{equation*}
U_2 =  \left( \begin{array}{cc}
\cos \frac{\theta_2}{2}\exp{
\frac{i}{2}(\psi_2+\phi_2)} & \sin \frac{\theta_2}{2} \exp{-\frac{i}{2}(\psi_2-\phi_2)} \\
-\sin \frac{\theta_2}{2} \exp{\frac{i}{2}(\psi_2-\phi_2)} & \cos \frac{\theta_2}{2}\exp{-
\frac{i}{2}(\psi_2+\phi_2)} \end{array} \right),    
\end{equation*}

\end{widetext}
where $\psi_1,\psi_2 \in [0,4\pi)$, $\phi_1,\phi_2 \in [0,2\pi]$, and $\theta_1,\theta_2 \in [0,\pi]$. 

\bibliography{Bell}
\end{document}